\newcommand{\lora}{\textsc{LoRa}}
\newcommand{\lorawan}{\textsc{LoRaWAN}}
\newcommand{\sigfox}{\textsc{SigFox}}
\begin{document}

\title{Hybrid Coded Replication in \lora{} Networks}

\author{Jean~Michel~de~Souza~Sant'Ana, Arliones~Hoeller, Richard~Demo~Souza, \\Samuel~Montejo-Sánchez,  Hirley Alves, and Mario~de~Noronha~Neto

\thanks{Jean Michel de Sousa Sant'Ana, Arliones Hoeller and Hirley Alves are with Centre for Wireless Communications, University of Oulu, Oulu, Finland (\{Jean.DeSouzaSantana, Hirley.Alves\}@oulu.fi).}
\thanks{Arliones Hoeller and Mario de Noronha Neto are with the Department of Telecommunications, Federal Institute for Education, Science and Technology of Santa Catarina, São José, Brazil (\{arliones.hoeller, noronha\}@ifsc.edu.br)}
\thanks{Richard Demo Souza and Arliones Hoeller are with the Department of Electrical and Electronics Engineering, Federal University of Santa Catarina, Florianóplis, Brazil (richard.demo@ufsc.br)}
\thanks{Samuel Montejo-S\'anchez  is with Programa Institucional de Fomento a la I+D+i, Universidad Tecnol\'ogica Metropolitana, Santiago, Chile. (smontejo@utem.cl)}
\thanks{This work has been partially supported in Brazil by CNPq, FUMDES-UNIEDU, Print CAPES-UFSC Project Automation 4.0, and INESC Brazil Project F-LOCO (Energisa/ANEEL PD-00405-1804/2018); in Finland by Academy of Finland (Aka) 6Genesis Flagship (Grant 318927), EE-IoT (Grant 319008), Aka Prof (Grant 307492), and FIREMAN (Grant 326301); and in Chile by FONDECYT Postdoctoral (Grant 3170021).}
\thanks{\textcopyright~2020 IEEE. Personal use of this material is permitted. Permission from IEEE must be obtained for all other uses, in any current or future media,
including reprinting/republishing this material for advertising or promotional
purposes, creating new collective works, for resale or redistribution to servers or lists, or reuse of any copyrighted component of this work in other works.}
}

\maketitle

\begin{abstract}
Low Power Wide Area Networks (LPWAN) are wireless connectivity solutions for Internet-of-Things (IoT) applications, including industrial automation.
Among the several LPWAN technologies, \lorawan{} has been extensively addressed by the research community and the industry.
However, the reliability and scalability of \lorawan{} are still uncertain.
One of the techniques to increase the reliability of \lorawan{} is message replication, which exploits time diversity.
This paper proposes a novel hybrid coded message replication scheme that interleaves simple repetition and a recently proposed coded replication method.
We analyze the optimization of the proposed scheme under minimum reliability requirements and show that it enhances the network performance without requiring additional transmit power compared to the competing replication techniques.
\end{abstract}
\begin{IEEEkeywords}
Industrial Internet-of-Things, LoRaWAN, Message Replication, Reliability, Wireless Networks.
\end{IEEEkeywords}

\section{Introduction}

Many Internet-of-Things (IoT) applications~\cite{Centenaro:WC:2016,Tome:TIE:2018} require wireless coverage of high-density areas using low power devices with long battery lifetime. Such demands make short-range technologies, like Bluetooth and ZigBee, as well as established cellular networks, like GSM and LTE, unfeasible solutions for different reasons, including cost, energy consumption, and limited coverage. Low Power Wide Area Network (LPWAN) technologies like \lorawan{}, \sigfox{}, and NB-IoT~\cite{Raza:CST:2017}, are natural candidates for such applications due to their long-range, large-capacity, and power efficiency~\cite{Lauridsen:VTCS:2017}.

Moreover, very low bit rate wide area networks (VLBR WAN), such as those built based on LPWAN technologies, are viable wireless solutions for several Industrial IoT applications demanding the coverage of several devices, with latency and reliability constraints that are somewhat milder for the industrial scenario. Among these applications, we may cite~\cite{Candell:IEM:2018}: monitoring and supervisory control of flow-based systems; factory monitoring of job-based systems; tracking of personnel, tools, and materials; as well as machinery health monitoring. 

Among the LPWAN technologies, \lorawan{} network protocols~\cite{LoRaAlliance:2017}, based on \lora{} physical layer technology~\cite{Semtech:2015}, is attracting much attention not only from the academy but also from industry~\cite{LoRaWANVerticalMarkets}. However, the ultimate performance limits of \lorawan{} are still unknown, while many open questions remain on how to increase or even to guarantee the reliability demanded by typical industrial applications.
This paper focuses on the capacity of \lorawan{} in terms of the number of devices in coverage, while meeting a given target reliability constraint, and when using message replication techniques to exploit time diversity.
We build on the repetition replication scheme investigated in~\cite{Hoeller:ACS:2018} and on a coded replication scheme recently proposed in~\cite{Montejo-Sanchez:WCL:2019}, to introduce a novel hybrid replication approach optimized for \lorawan, which can increase the network performance with respect to~\cite{Hoeller:ACS:2018} and~\cite{Montejo-Sanchez:WCL:2019}.

\subsection{Related Work}

Georgiou and Raza~\cite{Georgiou:WCL:2017} present a mathematical model taking into account connection and collision probabilities to analyze large-scale \lora{} networks. They show that collisions are the most limiting factor.
Mahmood \textit{et al.}~\cite{Mahmood:TII:2019} present a similar model that considers accumulated interference rather than only the highest interference as in~\cite{Georgiou:WCL:2017}, besides inter-SF interference. They show that considering the strongest colliding node is a good approximation only for low-density scenarios.  Mikhaylov \textit{et al.}~\cite{Mikhaylov:EWC:2016} present a scalability analysis of \lorawan{}, optimizing the number of nodes per SF.

There are several studies about the performance of LPWAN technologies   in industrial scenarios. Sommer \textit{et al.}~\cite{Sommer:ICII:2018} discuss and evaluate several LPWAN protocols for indoor industrial scenarios.
They conclude that \lorawan{} provides the best link budget and reliability.
Rizzi \textit{et al.}~\cite{Rizzi:WFCS:2017} propose a time slot channel hopping scheme for \lorawan{} in industrial scenarios.
They show that their scheme increases network capacity to support up to 6,000 devices with the highest transmission period without collisions, if it uses all time slots.
They conclude that with the proposed technique \lorawan{} is compatible with the industrial requirements and similar to solutions as WirelessHART and ISA100.11a.
Muzammir \textit{et al.}~\cite{Muzammir:ISCAIE:2019} studies the feasibility of \lorawan{} for indoor applications, taking into account packet losses, data rates, and communication range. They conclude that \lorawan{} is suitable for indoor applications and encourage future works on scalability and reliability.
{Haxhibeqiri \textit{et al.}~\cite{Haxhibeqiri:2017:ETFA} tested \lora{} technology coverage in an industrial indoor area of 34,000 m$^2$.  With a simulation analysis using one gateway, all 6 SF and 3 different frequency channels, they could support up to 6,000 devices with a success probability of 0.9.}
Finally, Lentz \textit{et al.}~\cite{Lentz:IECON:2018} present a industrial real-time monitoring system operating with \lora{} technology, with sensors on a food-processing plant for preventive maintenance.

In a different line of research, Hoeller \textit{et al.}~\cite{Hoeller:ACS:2018} extend the model in~\cite{Georgiou:WCL:2017} by adding spatial and time diversity, concluding that simple message replication (RT) is often beneficial but may flood the network. Mo \textit{et al.}~\cite{Mo:WD:2016} investigate the same replication scheme but in \sigfox{},  showing that there is an optimal number of replications. Marcelis \textit{et al.}~\cite{Marcelis:IoTDI:2017} present DaRe, an application layer coding technique for \lorawan{}.
It features characteristics of the fountain and convolutional coding, where packets carry redundant information from previous packets. The technique improves \lorawan{} in exchange for latency. Sandell and Raza~\cite{Sandell:SJ:2018} analyze DaRe and show that latency increases exponentially with packet loss. Finally, Sánchez \textit{et al.}~\cite{Montejo-Sanchez:WCL:2019} present some coding schemes for LPWAN. They propose Coded Transmission - Independent (CT), which presents a better success rate than the methods in~\cite{Hoeller:ACS:2018} and~\cite{Marcelis:IoTDI:2017} for several scenarios. However, none of these works analyze the energy consumption impact of such replication schemes.

\subsection{Contribution}

We present a novel coded message replication scheme, Hybrid Transmission (HT), that generalizes RT~\cite{Hoeller:ACS:2018} and CT~\cite{Montejo-Sanchez:WCL:2019} and allows fine-tuning for different network densities and reliability targets. Moreover, in order to analyze the maximum number of devices, we numerically optimize the parameters of the proposed coded replication scheme. The new method provides important gains in terms of users in coverage compared to the methods in~\cite{Hoeller:ACS:2018} and~\cite{Montejo-Sanchez:WCL:2019}, while the advantage of the proposed scheme increases with the reliability target. On top of that, we build upon an energy consumption model~\cite{Casals:SENSORS:2017} to analyze the impact of replications on battery lifetime. Finally, we propose a slight change to \lorawan{} to avoid unnecessary energy waste with replication schemes.

The rest of this paper is organized as follows.
Section~\ref{sec:lora} describes \lora{} and \lorawan{}.
Section~\ref{sec:base_model} presents the adopted system model.
Section~\ref{sec:diversity} discusses some replication schemes for LPWAN.
Section~\ref{sec:hybrid} introduces and analyzes the novel HT coded replication scheme.
Section \ref{sec:results} presents numerical results.
Finally, Section \ref{sec:conclusion} concludes the paper.

\section{\lora{} Overview}
\label{sec:lora}

\lora{} is a proprietary sub-GHz PHY technology for long-range low-power communications that employs chirp spread spectrum. \lora{} uses six practically orthogonal SFs, which allows for a rate and range tradeoff~\cite{Semtech:2015}. \lorawan{} is the most popular network protocol using \lora{} as a PHY, being regulated by \lora{} Alliance~\cite{LoRaAlliance:2017}.
\lorawan{} presents a star topology where devices communicate in a single-hop with the gateway, which in turn has a standard IP connection with a network server. \lorawan{} uses ALOHA~\cite{Goldsmith:Book:WC:2005} as medium access control, exploiting \lora{} characteristics, enabling multiple devices to communicate at the same time using different SFs. 

\lorawan{} provides several configurations, limited by regional regulations, and based on SF, that varies from 7 to 12, and bandwidth ($B$), usually 125 kHz or 250 kHz for uplink.
Higher SFs extend the symbol duration, reducing the rate and, thus, increasing  robustness.
Features that reduce the rate, such as higher SF and lower bandwidth, also increase the Time-on-Air (ToA), which in turn increases the channel usage, ultimately impacting the collision probability~\cite{Hoeller:ACS:2018}.
After each transmission, the \lorawan{} device opens two receive windows, with duration $T_{rx1w}$ and $T_{rx2w}$, for downlink messages.
Moreover, since \lora{} is a type of frequency modulation, it presents the capture effect~\cite{Bor:MSWIM:2016}, enabling the receiver to decode the strongest signal when there is a collision, provided that this signal is at least $\theta$ dB above the interference. 

\begin{table}[t]
\centering
\caption{LoRa Uplink characteristics. Payload of 9 bytes, $B~=~125$~kHz, CRC and Header modes enabled.}
\label{tab:lora_sensi}
\begin{tabular}{@{}lcccc@{}}
\toprule
\textbf{SF} & Time-on-Air $t_j$ & $T_{rx1w}$ & $T_{rx2w}$ & SNR threshold $q_j$ \\ \midrule
7                            & 41.22 ms    & 12.29 ms  & 1.28 ms  & -6 dB               \\
8                            & 72.19 ms    & 24.58 ms  & 2.30 ms  & -9 dB               \\
9                            & 144.38 ms   & 49.15 ms  & 4.35 ms  & -12 dB              \\
10                           & 247.81 ms   & 98.30 ms  & 8.45 ms  & -15 dB              \\
11                           & 495.62 ms   & 131.07 ms & 16.64 ms & -17.5 dB            \\
12                           & 991.23 ms   & 262.14 ms & 33.02 ms & -20 dB              \\ \bottomrule
\end{tabular}
\end{table}
Finally, Table~\ref{tab:lora_sensi} presents the sensibility, signal-to-noise ratio (SNR) thresholds, and ToA and receive windows time for each SF.
It shows that the increase of the SF nearly halves SNR ($-3$dB). Note that all times are longer with higher SF.

\section{System Model}
\label{sec:base_model}

Similar to \cite{Georgiou:WCL:2017}, consider a circular region with radius $R$ meters and area $V = \pi R^2$, where, on average, $\bar{N}$ end-devices are uniformly deployed at random.
The positioning of such devices is described by six Poisson Point Processes (PPP) $\Phi_j$, one for each SF$j$, with density $\rho_j > 0$ in $V$ and 0 otherwise, where $\bar{N}=\rho V$.
The distance from $i$-th device to the gateway at the origin is $d_i$ meters.
Devices transmit using unslotted-ALOHA.
All nodes run the same application, while the ratio of time they transmit in an interval is the activity factor
    $p_j = \frac{t_j}{P}$,
where $P$ is the average message period, \textit{i.e.}, the time between transmissions, and $t_j$ is the time-on-air for SF$j$, according to Table~\ref{tab:lora_sensi}.
The activity factor vector is $p = [p_7, \dots, p_{12}]$, each element being the activity factor for a SF.
If nodes transmit a given amount of messages per day, due to different ToA per SF, nodes using SF12 would have a larger activity factor than nodes using SF7.
Note that the activity factor should respect the regional duty cycle restriction, which is 1\% for the most used configurations in Europe.
All devices use the same bandwidth $B$~Hz and the same total transmit power $\mathcal{P}_t$.

The model considers both path loss attenuation, $g(d_i)$, and Rayleigh fading, $h_i$.
We use an empirical path loss model of an indoor industrial environment in the sub-GHz band~\cite{Tanghe:WCL:2008}, $g(d_i) = \textup{PL}_0^{-1} \left( \tfrac{d_i}{d_0} \right)^{-\eta}$, where $\eta$ is the path loss exponent and $\textup{PL}_0$ is the loss at the reference distance $d_0$ meters.
Given a signal $s_1$ transmitted by a \lora{} end-device, the received signal at the gateway, $r_1$, is the sum of the attenuated transmitted signal, interference and noise,
\begin{equation}
    r_1 = \sqrt{\mathcal{P}_t g(d_1)} h_1 s_1 + \sum_{k\in\Phi_j} {\sqrt{\mathcal{P}_t g(d_k)} h_k s_k} + w \label{eqn:r1},
\end{equation}
where $k$, in the summation, iterates over the active nodes indicated by the PPP, and $w$ is the additive white Gaussian noise with zero mean and variance $\mathcal{N} = -174 + \textup{NF} + 10 \log(B)$~dBm, while $\textup{NF}$ is the receiver noise figure.
An outage takes place either if there is no connection between a node and the gateway or if there is a collision.

\subsection{Outage Condition 1: Disconnection}

The disconnection probability is the complement of the connection probability denoted by $H_1$.
The connection probability depends directly on the distance between a node and the gateway.
A node is considered connected if the SNR of the received signal is above the SNR threshold in Table~\ref{tab:lora_sensi}.
The connection probability can be written as
\begin{equation}
    H_{1} = \mathbb{P}[\textup{SNR} \geq q_j ~|~ d_1], \label{eqn:h1_1}
\end{equation}
where $q_j$ is the SNR threshold for SF$j$, and $d_1$ is the distance between that node and the gateway.
The node at $d_1$ uses SF$j$.
Assuming Rayleigh fading implies exponentially distributed instantaneous SNR because $|h_1|^2 \sim \textup{exp}(1)$. Thus, $H_1$ is
\begin{equation}
    H_{1} = \mathbb{P} \left [ |h_1|^2 \geq \frac{\mathcal{N} q_j}{\mathcal{P}_t g(d_1)} ~\Big|~ d_1 \right ] = \textup{exp} \left ( - \frac{\mathcal{N} q_j}{\mathcal{P}_t g(d_1)} \right ) \label{eqn:h1_2}.
\end{equation}

\subsection{Outage Condition 2: Collision}

A collision happens when packets are simultaneously transmitted using the same SF.
Due to \lora{} capture effect, the receiver recovers a packet if its power is at least $\theta$~times above the interference.
We model the capture probability, \textit{i.e.}, the complement of the collision probability, as in~\cite{Hoeller:ACS:2018}.
First, we define the Signal-to-Interference Ratio (SIR) of a packet from a node $d_1$ meters from the gateway using SF$j$ as
\begin{align}
    \textup{SIR} & = \frac{\mathcal{P}_t|h_1|^2 g(d_1)}{\sum_{k\in\Phi_j} \mathcal{P}_t |h_k|^2 g(d_{k})} = \frac{|h_1|^2 d_1^{-\eta}}{\sum_{k\in\Phi_j} |h_k|^2 d_{k}^{-\eta}}.
\end{align}
Then, due to the capture effect, the probability of a successful reception in the presence of interference is
\begin{align}
    Q_{1} &= \mathbb{P}\left[\textup{SIR} > \theta ~|~ d_1 \right] \nonumber \\
    &= \mathbb{P}\left[|h_1|^2 > \theta d_1^\eta \sum_{k\in\Phi_j} |h_k|^2 d_{k}^{-\eta} ~\Big|~ d_1 \right] \nonumber \\
    &= \mathbb{E}_{|h_k|^2, \Phi} \left[ \prod_{k\in\Phi_j} \textup{exp}\left ( - \theta d_1^\eta |h_k|^2 d_{k}^{-\eta} \right ) \right].
\end{align}
Since $|h_k|^2$ is also exponentially distributed as  $|h_1|^2$, then
\begin{align}
    Q_{1} &= \mathbb{E}_{\Phi_j} \left[ \prod_{k\in\Phi_j} \frac{1}{1 + \theta d_1^\eta d_{k}^{-\eta}} \right].
\end{align}
Following~\cite{Hoeller:ACS:2018}, we use the probability generating functional of the product over PPPs, where $\mathbb{E}[\prod_{x\in\Phi} f(x)] = \textup{exp}(-\alpha_j \int_{R^2} 1-f(x) ~\textup{d}x)$ with $\alpha_j = 2 \rho_j p_j$ as the PPP density and $d_k$ converted to polar coordinates. Note that we double $\alpha_j$ due to the use of unslotted-ALOHA~\cite{Berioli:FTN:2016}. Thus
\begin{align}
    Q_{1} &= \textup{exp}\left ( -4\pi \rho_j p_j \int_{0}^{R} \frac{\theta d_1^\eta d_k^{-\eta}}{1 + \theta d_1^\eta d_k^{-\eta}} d_k~\textup{d}d_k \right ), \label{eqn:q1a}
\end{align}where $R$ is the network radius and $p_j$ is the activity factor related to the SF$j$ used by the node at $d_1$.

Finally, by using the definition of the Gauss Hypergeometric function $_2F_1(a,b;c;z)$~\cite{Daalhuis:Chapter:2010} we have that
\begin{align}
    Q_{1} &= \textup{exp} \left [ -2 ~ \pi R^2 ~ \rho_j p_j ~ _2F_1 \left(1,\frac{2}{\eta};1+\frac{2}{\eta}; -\frac{R^{\eta}}{\theta d_1^\eta} \right) \right ] . \label{eqn:q1}
\end{align}

\subsection{Coverage Probability}

The coverage probability is the probability that a node can communicate with the gateway.
Since the connection probability in \eqref{eqn:h1_2} and the probability of a successful reception in the presence of interference in~\eqref{eqn:q1} are treated as independent probabilities, we lower bound the coverage probability as $H_1Q_1$~\cite{Beltramelli:WIMOB:2018}.
Therefore, an upper bound to the link outage probability is defined as $\mathcal{O}_1 = 1-H_1Q_1$.

\section{Previous Time Diversity Schemes} \label{sec:diversity}

This section presents two message replication schemes for LPWANs. For the sake of clearness, we will call information message the first uncoded message transmitted inside a period. All the other messages we call redundant messages.

\subsection{Replication Transmission (RT)}\label{sec:RT}

Hoeller \textit{et al.}~\cite{Hoeller:ACS:2018} considered RT, simple message replication, where a message is sent $m$ times during one period, disregarding any downlink channel acknowledgment.
Message replicas are separated in time to respect the time coherence of the channel, ensuring different channel realizations.
Notice that the increase of transmissions inside one period affects $Q_1$ since it increases the collision probability.
$Q_{1}$ takes the number of interfering nodes as a Poisson distributed variable with mean $v_j=p_j\rho_j V$.
With $m$ messages sent in each period, \textit{i.e.}, $M=m$ messages in total, the channel usage increases $M$ times.
Thus, it is necessary to adjust the mean of the Poisson distribution proportionally, resulting in $v_M=Mp_j\rho_j V$.
We denote by $Q_{1,M}$ the capture probability $Q_1$ with activity factor $Mp_j$.
Note that $Q_1=Q_{1,1}$.
The link outage probability is
\begin{equation}
    \mathcal{O}_M = 1 - H_1Q_{1,M}.
\end{equation}
The probability of failing to decode all $M$ packets is
\begin{equation} \label{rt_outage}
    \mathcal{O}_{\textup{RT}}(M) = (\mathcal{O}_{M})^{M}.
\end{equation}

\subsection{Coded Transmission (CT)}

The RT scheme is the simplest way to achieve time diversity.
However, lower outages can be obtained by replicating coded messages, which are combinations of previous information messages.
The receiver decodes the coded messages to recover lost information, provided that a certain amount of messages is successfully received.
Marcelis \textit{et al.}~\cite{Marcelis:IoTDI:2017} and Montejo-Sánchez \textit{et al.}~\cite{Montejo-Sanchez:WCL:2019} present cases of embedded redundancy, where packets carry both systematic and parity parts.
Montejo-Sánchez \textit{et al.}~\cite{Montejo-Sanchez:WCL:2019} shows, however, that embedded schemes tend to be less reliable, less energy efficient, and provide less coverage than independently coded packet transmissions, as is the case of their proposed independent Coded Transmission (CT) scheme.

The central concept of CT is to combine different messages using linear operations (\textit{e.g.}, XOR) and send them as independent packets in addition to the information messages.
In CT, assuming two information messages A and B that are independently transmitted, an additional coded message $\textup{A}\oplus\textup{B}$ is also transmitted as redundancy.
If any two of these three messages are successfully received, the decoder can recover both A and B information messages.

The CT scheme has one parameter, $n$, which is the number of coded transmissions per information message.
If $n=0$ then no replication takes place.
For example, when $n=1$, the $k$-th information message is followed by a coded transmission combining the $k$-th and $(k-1)$-th messages.
If $n=2$, the $k$-th information message is followed by a coded transmission combining the $k$-th and $(k-1)$-th messages, and a coded transmission combining the $k$-th and $(k-2)$-th messages.
Thus, each information packet is followed, within the period, by $n$ redundant packets.
In an infinite sequence of transmissions, each message appears in a window of $2n$ transmissions in coded messages plus the original information one.
Theoretically, there is an infinite amount of combinations of received messages allowing for the recovery of a lost information message.
To limit the latency, however, the decoding window of each message is restricted to $\pm 3$, \textit{i.e.}, from ($k-3$)-th to ($k+3$)-th.
This restriction also allows the deduction of the following closed-form outage probability expression of CT which depends on $M=n+1$~\cite{Montejo-Sanchez:WCL:2019}:
\begin{equation}
    \mathcal{O}_{\textup{CT}}(n) = \mathcal{O}_M^{2n+1}(1+ \mathcal{O}_M+ \mathcal{O}_M^2 - 5\mathcal{O}_M^3 + 4\mathcal{O}_M^4 - \mathcal{O}_M^5)^{2n}.\label{cti_outage}
\end{equation}

\section{Hybrid Replication Scheme}
\label{sec:hybrid}

The CT scheme presents a significant improvement compared to RT and other embedded redundancy methods~\cite{Montejo-Sanchez:WCL:2019}.
Note that, for RT, increasing the number of replicas has a beneficial impact in the exponent of the outage probability, but also a negative impact in the number of messages, what increases the number of collisions.
In CT, the successive replicas are combined before transmission. The use of linear combinations (e.g., XOR operation) of previous messages increases the redundancy with limited cost in terms of spectral efficiency~\cite{Montejo-Sanchez:WCL:2019}.
However, these linear combinations generate dependence on multiple information for the successful decoding of an encoded message, which is harmful for high link outage probabilities.
Consequently, a Hybrid Transmission (HT) scheme capable of using replicas of uncoded messages and redundancy based on linear combinations with previous messages should outperform RT and CT.
The proposed HT scheme allows for the transmission of uncoded redundant messages only, of coded messages only, or a combination of uncoded and coded redundant replications.

The proposed HT scheme has three parameters: $m$ is the number of uncoded message replicas, as in RT; $n$, as in CT, is the number of different coded messages; and $r$ is the number of replicas of each coded message.
For instance, when $n=2$, in the CT scheme a node transmits the $k$-th information message followed by two coded messages: $k$-th $\oplus$ ($k-1$)-th and $k$-th $\oplus$ ($k-2$)-th. In HT, with $n=2$, $m=2$ and $r=3$, for example, a node sends the $k$-th uncoded message twice ($m=2$), and then sends  two differently coded messages ($n=2$) three times each ($r=3$). Then, in HT $M=m+nr$ is the number of messages containing either information or redundancy in a period.

\begin{table}[tb]
\centering
\caption{Expansion of possible successful decoding events of HT from $k-3$ to $k+3$ with $n=1$.}
\label{tab:events}
\scalebox{0.86}{
\begin{tabular}{|c|c|c|c|c|c|c|c|c|c|c|c|c|c|}
\hline
\multicolumn{2}{|c|}{k-3} & \multicolumn{2}{c|}{k-2} & \multicolumn{2}{c|}{k-1} & \multicolumn{2}{c|}{k} & \multicolumn{2}{c|}{k+1} & \multicolumn{2}{c|}{k+2} & \multicolumn{2}{c|}{k+3} \\ \hline
\textbf{$\mathcal{M}$} & \textbf{$\mathcal{R}$} &\textbf{$\mathcal{M}$} & \textbf{$\mathcal{R}$} &\textbf{$\mathcal{M}$} & \textbf{$\mathcal{R}$} &\textbf{$\mathcal{M}$} & \textbf{$\mathcal{R}$} &\textbf{$\mathcal{M}$} & \textbf{$\mathcal{R}$} &\textbf{$\mathcal{M}$} & \textbf{$\mathcal{R}$} &\textbf{$\mathcal{M}$} & \textbf{$\mathcal{R}$} \\ \hline
 &  &  &  &  &  & S &  &  &  &  &  &  &  \\ \hline
 &  &  &  & \cellcolor[HTML]{87CEFA}S &  & F & \cellcolor[HTML]{87CEFA}S &  &  &  &  &  &  \\ \hline
 &  &  &  &  &  & F &  & \cellcolor[HTML]{34FF34}S & \cellcolor[HTML]{34FF34}S &  &  &  &  \\ \hline
 &  & \cellcolor[HTML]{87CEFA}S &  & \cellcolor[HTML]{87CEFA}F & \cellcolor[HTML]{87CEFA}S & F & \cellcolor[HTML]{87CEFA}S &  &  &  &  &  &  \\ \hline
 &  &  &  &  &  & F &  & \cellcolor[HTML]{34FF34}F & \cellcolor[HTML]{34FF34}S & \cellcolor[HTML]{34FF34}S & \cellcolor[HTML]{34FF34}S &  &  \\ \hline
\cellcolor[HTML]{87CEFA}S &  & \cellcolor[HTML]{87CEFA}F & \cellcolor[HTML]{87CEFA}S & \cellcolor[HTML]{87CEFA}F & \cellcolor[HTML]{87CEFA}S & F & \cellcolor[HTML]{87CEFA}S &  &  &  &  &  &  \\ \hline
 &  &  &  &  &  & F &  & \cellcolor[HTML]{34FF34}F & \cellcolor[HTML]{34FF34}S & \cellcolor[HTML]{34FF34}F & \cellcolor[HTML]{34FF34}S & \cellcolor[HTML]{34FF34}S & \cellcolor[HTML]{34FF34}S \\ \hline
\end{tabular}
}
\end{table}

\newtheorem{theorem}{{\bf Theorem}}
\newtheorem{lemma}{{\bf Lemma}}
\newtheorem{proposition}{Proposition}
\newtheorem{corollary}{{\bf Corollary}}[theorem]
\newtheorem{definition}{Definition}
\bigskip
\begin{lemma}\label{lem:1}
\textit{The outage probability for HT with $n=1$ is a function of two independent sets of events with probability}
\begin{align} \label{eq:hyb_Etab}
    \mathcal{E}_{m,n,r} &={} (1-\mathcal{O}_M^m)(1-\mathcal{O}_M^r) + \mathcal{O}_M^m(1-\mathcal{O}_M^m)(1-\mathcal{O}_M^r)^2 \nonumber\\
    & + \mathcal{O}_M^{2m}(1-\mathcal{O}_M^m)(1-\mathcal{O}_M^r)^3.
\end{align}
\end{lemma}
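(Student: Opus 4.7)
My plan is to derive~\eqref{eq:hyb_Etab} by enumerating the minimal decoding chains that recover the $k$-th information message from one side of the decoding window, using the paper's event table as a map for the cases. I set $P_\mathcal{M} = 1 - \mathcal{O}_M^m$ for the probability that at least one of the $m$ uncoded copies of a given information message survives and $P_\mathcal{R} = 1 - \mathcal{O}_M^r$ for the analogous quantity for the $r$ copies of a given coded message. Since distinct messages occupy distinct transmissions and each copy suffers an independent outage $\mathcal{O}_M$, the events $\mathcal{M}_i$ (success of information message $i$) and $\mathcal{R}_i$ (success of coded message $i$) are mutually independent across distinct indices $i$.

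With $n=1$ the coded slot $\mathcal{R}_i$ carries $i\oplus(i-1)$, so any backward decoding of $k$ uses only messages drawn from $\{\mathcal{M}_{k-3},\mathcal{M}_{k-2},\mathcal{M}_{k-1},\mathcal{R}_{k-2},\mathcal{R}_{k-1},\mathcal{R}_k\}$, while any forward decoding uses only messages drawn from $\{\mathcal{M}_{k+1},\mathcal{M}_{k+2},\mathcal{M}_{k+3},\mathcal{R}_{k+1},\mathcal{R}_{k+2},\mathcal{R}_{k+3}\}$. These two index sets are disjoint, which is exactly the lemma's ``two independent sets of events'' claim and reduces the task to computing the probability $\mathcal{E}$ that a single (say past) side succeeds.

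For that probability I would condition on the index $i^\star$ of the nearest successfully received uncoded message to the left of $k$ inside the $\pm 3$ window. If $i^\star=k-1$ the past side recovers $k$ iff $\mathcal{R}_k$ is also received; if $i^\star=k-2$ it additionally needs $\mathcal{R}_{k-1}$ and $\mathcal{R}_k$; if $i^\star=k-3$ it needs the triple $\mathcal{R}_{k-2},\mathcal{R}_{k-1},\mathcal{R}_k$; and if no $\mathcal{M}_{k-i}$ survives in the window the past side fails. These four outcomes partition the sample space, so weighting each conditional success probability by the probability of its case and summing gives
\begin{equation*}
    \mathcal{E} = P_\mathcal{M}P_\mathcal{R} + (1-P_\mathcal{M})P_\mathcal{M}P_\mathcal{R}^2 + (1-P_\mathcal{M})^2 P_\mathcal{M}P_\mathcal{R}^3,
\end{equation*}
and substituting $P_\mathcal{M}=1-\mathcal{O}_M^m$ and $P_\mathcal{R}=1-\mathcal{O}_M^r$ reproduces~\eqref{eq:hyb_Etab}.

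The main obstacle is the combinatorial bookkeeping needed to argue that (i) conditioning on $i^\star$ really yields a valid partition, and (ii) within each case the listed conjunction of coded-message successes is both necessary and sufficient for recovering $k$ inside the $\pm 3$ window, with no shorter chain swallowed in a larger one and no alternative decoding path lying outside the enumerated rows of the event table. Once that is verified row by row from the table, what remains is an elementary product of independent Bernoulli events.
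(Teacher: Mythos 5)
Your proposal is correct and follows essentially the same route as the paper: conditioning on the nearest successfully received uncoded message on one side of the window is precisely the row-by-row partition encoded in Table~\ref{tab:events}, and your three terms $P_\mathcal{M}P_\mathcal{R}$, $(1-P_\mathcal{M})P_\mathcal{M}P_\mathcal{R}^2$, $(1-P_\mathcal{M})^2P_\mathcal{M}P_\mathcal{R}^3$ are exactly the blue-highlighted events the paper sums to obtain $\mathcal{E}_{m,n,r}$. The disjointness of the backward and forward index sets that you note is likewise the paper's justification for the ``two independent sets of events,'' so no further comparison is needed.
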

\begin{proof}
Consider Table \ref{tab:events}, where each cell indicates whether the decoding of a message was successful (S) or failed (F).
Each wide column represents a period, going from $(k-3)$-th to $(k+3)$-th.
The narrow columns represent packets containing either original messages ($\mathcal{M}$) or coded messages ($\mathcal{R}$).
Each $\mathcal{M}$ or $\mathcal{R}$ column represents events already accounting for the repetitions ($m$ or $r$) of either type of message.
Thus, an F on any cell represents an outage event.
Therefore, the outage probability of an original message is $\mathbb{P}(\mathcal{M}=\textup{F})=\mathcal{O}_M^m$, and the outage probability of a coded message is $\mathbb{P}(\mathcal{R}=\textup{F})=\mathcal{O}_M^r$.
There are two independent sets of events with the same occurrence probability that allow for the decoding of the $k$-th message if the original transmission fails.
The first set, which consists of the events that are a linear combination of the $(k-1)$-th message, is denoted $E_{H,1}$ and highlighted in blue.
The second set is formed by the events that are a linear combination of the $(k+1)$-th message, denoted $E_{H,2}$ and highlighted in green.
Using Table~\ref{tab:events}, we can determine the probability of these events as $\mathbb{P}[E_{H,1}]=\mathbb{P}[E_{H,2}]=\mathcal{E}_{m,n,r}$.
Therefore, the outage probability of HT for $n=1$ can be determined as the union of the events in Table \ref{tab:events}: the original packet transmission (first line in the table) and the two independent sets of events, considering that the original transmission fails. Thus,
\begin{align} \label{eq:hyb_1n}
    \mathcal{O}_{\textup{HT}}(m,1,r) &={} 1 - \left [ (1-\mathcal{O}^m_M) + \mathcal{O}^m_M\mathbb{P} \left ( E_{H,1} \bigcup E_{H,2} \right ) \right ] \nonumber \\
 &={} \mathcal{O}^m_M - \mathcal{O}^m_M(2\mathcal{E}_{1,m,r} - \mathcal{E}_{1,m,r}^2) \nonumber \\
 &={}\mathcal{O}^m_M(\mathcal{O}^{3m}_M+\mathcal{O}^r_M+\mathcal{O}^{m+r}_M-2\mathcal{O}^{2(m+r)}_M \nonumber \\
& -\mathcal{O}^{3(m+r)}_M + \mathcal{O}^{2m+r}_M - 3\mathcal{O}^{3m+r}_M - \mathcal{O}^{m+2r}_M \nonumber \\
& + 3\mathcal{O}^{3m+2r}_M + \mathcal{O}^{2m+3r}_M)^2.
\end{align}
\end{proof}

\bigskip
\begin{lemma}\label{lem:2}
\textit{There are $2n$ independent sets of events, where}
\begin{equation}
    \mathbb{P}(E_{H,j}) = \mathcal{E}_{m,n,r}, \forall j \in \{ 1, \ldots, 2n \}.
\end{equation}
\end{lemma}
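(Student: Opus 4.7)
The plan is to generalize the argument of Lemma~\ref{lem:1} by constructing one recovery set for each of the $2n$ neighbouring periods through which the $k$-th message can be reached in HT. In the HT scheme with parameter $n$, message $k$ is XOR-combined with each of $k-n,\dots,k-1,k+1,\dots,k+n$, giving $2n$ distinct coded packets that carry information about it. For $j\in\{1,\dots,n\}$ I would define a backward set $E_{H,j}$ that relies on the coded packet $k\oplus(k-j)$ transmitted in period $k$ together with the events needed to recover the $(k-j)$-th information message from transmissions strictly to the left of period $k$; the $n$ forward sets $E_{H,n+j}$ are defined symmetrically through the packet $(k+j)\oplus k$ sent in period $k+j$.

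Next I would show that every such set has probability $\mathcal{E}_{m,n,r}$. For a fixed neighbour the recovery chain has exactly the three-term structure of the blue/green chain in Table~\ref{tab:events}: the $\pm 3$ decoding-window restriction caps the chain at three hops regardless of $n$, and each hop contributes either a factor $1-\mathcal{O}_M^m$ (an information-message block of $m$ replicas) or a factor $1-\mathcal{O}_M^r$ (a coded-message block of $r$ replicas). Summing over the three hop lengths reproduces~\eqref{eq:hyb_Etab}. The dependence on $n$ enters only implicitly, through the per-period load $M=m+nr$ that affects $\mathcal{O}_M$, so all $2n$ sets share the same value $\mathcal{E}_{m,n,r}$ with no explicit dependence on the index $j$.

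The main obstacle is proving mutual independence of the $2n$ sets. Backward and forward groups are trivially disjoint because they live on non-overlapping sides of period $k$. Within each group, I would route chain $j$ along the $j$-step linear sub-sequence of periods $k, k-j, k-2j, k-3j$, using in each of those periods the coded packet of the form $p\oplus(p-j)$. Each period $p$ offers $n$ such packets, one for each $i\in\{1,\dots,n\}$, so selecting $i=j$ gives chain $j$ its own dedicated coded transmission that is not reused by any chain $j'\neq j$. The three-term decomposition of $\mathcal{E}_{m,n,r}$ also lets one restrict each chain to a single information-message event in every realization, namely the terminator of its first successful hop, and these terminators can be placed on disjoint $j$-lattices. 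Since packet transmissions on the \lora{} channel are independent under the adopted model, disjointness of the underlying events lifts to statistical independence of the sets, yielding $\mathbb{P}(E_{H,j})=\mathcal{E}_{m,n,r}$ for all $j\in\{1,\dots,2n\}$.
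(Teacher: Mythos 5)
Your construction of the $2n$ recovery sets---one per neighbour $k-n,\dots,k-1,k+1,\dots,k+n$, reached through the coded packet that XORs the $k$-th message with that neighbour---and your evaluation of each set's probability by reproducing the three-hop chain of Table~\ref{tab:events} is exactly what the paper's own (three-sentence) proof does: it enumerates the sets ``from the $(k-n)$-th to the $(k+n)$-th messages'' and appeals to symmetry with Lemma~\ref{lem:1}. Where you go beyond the paper is in trying to actually establish mutual independence, and that added step is where your argument breaks. The $j$-lattices $\{k-j,k-2j,k-3j\}$ are \emph{not} disjoint across $j$: the $j=1$ chain conditions on the information blocks $\mathcal{M}_{k-1},\mathcal{M}_{k-2},\mathcal{M}_{k-3}$ (rows 4 and 6 of Table~\ref{tab:events}), while the $j=2$ chain already needs $\mathcal{M}_{k-2}$ in its first hop (receive $k\oplus(k-2)$, then recover $k-2$ from its own $m$ replicas). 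Both sets are nondegenerate increasing functions of the same random variable $\mathcal{M}_{k-2}$, so they are positively correlated rather than independent, and disjointness does not ``lift'' as you claim. A second problem with the $j$-step routing is that for $j\geq 2$ the second and third hops sit at $k-2j$ and $k-3j$, outside the $\pm 3$ decoding window that is the very reason $\mathcal{E}_{m,n,r}$ has only three terms; truncating the chain to respect the window would change its probability and make it depend on $j$, contradicting the claim $\mathbb{P}(E_{H,j})=\mathcal{E}_{m,n,r}$ for all $j$.

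To be fair, the paper does not close this gap either: its proof asserts the count of the sets and leaves both the common value of their probabilities and their independence implicit, inheriting the structure from the CT analysis of~\cite{Montejo-Sanchez:WCL:2019}. Independence is genuinely provable only for $n=1$, where the backward and forward sets involve disjoint collections of packets. For $n\geq 2$ the honest options are either to restrict the recovery sets so that distinct $j$ use disjoint packets (which lowers each set's probability below $\mathcal{E}_{m,n,r}$ and turns~\eqref{eq:hyb_pe} into a bound), or to state the product form $1-(1-\mathcal{E}_{m,n,r})^{2n}$ explicitly as an approximation. Your writeup should not present the lattice-disjointness argument as a proof, because as written it is false.
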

\begin{proof}
Each independent set of events is related to a linear combination of a message. When $n=1$, we have the events that are a linear combination of the $(k-1)$-th and $(k+1)$-th messages. With $n=2$, we also have the events related to the $(k-2)$-th and $(k+2)$-th messages. This set expansion follows as we have sets from the $(k-n)$-th to the $(k+n)$-th messages, totaling $2n$ independent sets of events.
\end{proof}

\bigskip
\begin{theorem}
\textit{The outage probability of HT is}
\begin{align}\small \label{eqn:HT_n_outage}
    \mathcal{O}_{\textup{HT}}(m,n,r) = \mathcal{O}_M^{m(2n+1)} (F_{m,n,r})^{2n},
\end{align}
where
\begin{align}
    F_{m,n,r} &= \mathcal{O}_M^{2m} + (1-\mathcal{O}_M^m)(\mathcal{O}_M^{m+3r} - \mathcal{O}_M^{2r} - 3\mathcal{O}_M^{m+2r}) \nonumber \\
    & +\mathcal{O}_M^r(1+\mathcal{O}_M^{-m} +\mathcal{O}_M^m - 3\mathcal{O}_M^{2m}).
\end{align}
\end{theorem}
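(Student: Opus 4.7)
The plan is to combine Lemmas \ref{lem:1} and \ref{lem:2} to get a compact implicit form of the outage probability, and then match it to the factored expression in the theorem by a direct algebraic identity. The probabilistic step is essentially a repetition of the argument used for Lemma \ref{lem:1} but with $2n$ independent sets rather than two; the remaining work is pure bookkeeping.

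First I would condition on whether the $m$ replicas of the $k$-th information message all fail. This event has probability $\mathcal{O}_M^m$; on its complement, decoding is automatic and contributes nothing to the outage. If all $m$ replicas fail, the $k$-th message can still be recovered via any one of the linear combinations in which it appears. By Lemma \ref{lem:2} these recovery mechanisms split into $2n$ mutually independent sets $E_{H,1},\dots,E_{H,2n}$, one per neighbouring message $k{-}n,\dots,k{-}1,k{+}1,\dots,k{+}n$, and each has probability $\mathcal{E}_{m,n,r}$ given by \eqref{eq:hyb_Etab}. Using independence, the probability that none of these sets succeeds is $(1-\mathcal{E}_{m,n,r})^{2n}$, so, mirroring the derivation of \eqref{eq:hyb_1n},
\begin{equation}
\mathcal{O}_{\textup{HT}}(m,n,r) \;=\; \mathcal{O}_M^m\,\bigl(1-\mathcal{E}_{m,n,r}\bigr)^{2n}. \nonumber
\end{equation}

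Second I would show the purely algebraic identity $1-\mathcal{E}_{m,n,r} = \mathcal{O}_M^m\,F_{m,n,r}$, after which the factor $\mathcal{O}_M^m$ can be absorbed inside the parenthesis to yield $\mathcal{O}_M^m(\mathcal{O}_M^m F_{m,n,r})^{2n}=\mathcal{O}_M^{m(2n+1)}(F_{m,n,r})^{2n}$, which is the claimed formula. Writing $a=\mathcal{O}_M^m$ and $b=\mathcal{O}_M^r$, \eqref{eq:hyb_Etab} becomes $\mathcal{E}_{m,n,r}=(1-a)(1-b)\bigl[1+a(1-b)+a^2(1-b)^2\bigr]$, a finite geometric sum whose expansion is a polynomial in $a$ and $b$ of degrees at most three. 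I would expand $1-\mathcal{E}_{m,n,r}$ and $a\cdot F_{m,n,r}$ (using the theorem's definition of $F_{m,n,r}$, noting that the $\mathcal{O}_M^{-m}$ term is precisely what cancels the outer factor $a$) and compare the coefficients of each monomial $a^i b^j$ on both sides.

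The main obstacle is this algebraic bookkeeping: the identity involves ten nonzero monomials with signs alternating between $+$ and $-$ (e.g.\ $b,ab,-ab^2,a^2b,-2a^2b^2,\dots,-a^3b^3$), so there is real risk of a sign slip. To protect against that I would run two independent sanity checks. First, specialising to $m=r=1$ should collapse $F_{m,n,r}$ into $1+\mathcal{O}_M+\mathcal{O}_M^2-5\mathcal{O}_M^3+4\mathcal{O}_M^4-\mathcal{O}_M^5$, thereby reproducing the CT outage formula \eqref{cti_outage} as a special case of HT. Second, setting $n=1$ in the theorem should reduce to Lemma \ref{lem:1}'s closed form \eqref{eq:hyb_1n}. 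Passing both checks would confirm the identity and hence the theorem.
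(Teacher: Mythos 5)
Your proposal is correct and follows essentially the same route as the paper: condition on all $m$ uncoded replicas failing, invoke the $2n$ independent sets of Lemma \ref{lem:2} to write the recovery probability as $1-(1-\mathcal{E}_{m,n,r})^{2n}$, and simplify to $\mathcal{O}_M^m(1-\mathcal{E}_{m,n,r})^{2n}$ before matching the factored form. The only difference is that you make explicit the algebraic identity $1-\mathcal{E}_{m,n,r}=\mathcal{O}_M^m F_{m,n,r}$ (which indeed checks out) where the paper simply says ``simplifying yields'' the result.
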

\begin{proof}
From Lemma \ref{lem:2}, there are $2n$ independent sets of events, while from the proof of Lemma~\ref{lem:1}, the outage probability of HT depends on the union of the $2n$ events. Thus,
\begin{align}\small \label{eq:HT_out1}
  \mathcal{O}_{\textup{HT}}(m,n,r) &= 1 - \left [(1-\mathcal{O}^m_M) + \mathcal{O}^m_M \mathbb{P} \left ( \bigcup_{j=1}^{2 n} E_{H,j} \right ) \right ].
 \end{align}
Considering the probability of the union of sets of events of the last term, there is an outage only when all $2n$ $E_{H,j}$ events fail.
Thus, we write this success probability as its complement
\begin{equation} \label{eq:hyb_pe}
    \mathbb{P} \left ( \bigcup_{j=1}^{2 n} E_{H,j} \right ) = 1-(1-\mathcal{E}_{m,n,r})^{2n}.
\end{equation}
Applying~(\ref{eq:hyb_Etab}) and~(\ref{eq:hyb_pe}) to~(\ref{eq:HT_out1}) and simplifying yield~(\ref{eqn:HT_n_outage}).
\end{proof}

\bigskip
Note that if $n=0$ then there is no coding and only simple replicas are sent, making $r$ meaningless. Thus, when $n=0$ (\ref{eqn:HT_n_outage}) reduces to (\ref{rt_outage}), leading to Corollary \ref{cor:ht-rt}.

\bigskip
\begin{corollary} \label{cor:ht-rt}
\textit{The HT scheme is a generalization of RT since} $\mathcal{O}_{\textup{HT}}(m,0,r) = \mathcal{O}_{RT}(m)$.
\end{corollary}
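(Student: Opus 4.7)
The plan is to perform a direct substitution of $n=0$ into the general HT outage formula from Theorem~1 and then reconcile the resulting expression with the RT outage formula in~\eqref{rt_outage}. Since both expressions are already derived in closed form, no additional probabilistic reasoning is required; the corollary is essentially a consistency check that the HT parameter family reduces to the RT family along the slice $n=0$.

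First I would set $n=0$ in~\eqref{eqn:HT_n_outage}. The exponent of the $F_{m,n,r}$ factor becomes $2n=0$, so $(F_{m,0,r})^{0}=1$ by convention, regardless of the value of $r$. This is the crucial observation that makes $r$ meaningless when $n=0$, in agreement with the text preceding the corollary. The remaining factor is $\mathcal{O}_M^{m(2n+1)}$, which simplifies to $\mathcal{O}_M^{m}$ at $n=0$.

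Next I would reconcile the index $M$ on both sides. By definition $M=m+nr$, so $n=0$ yields $M=m$, and therefore $\mathcal{O}_M$ in the HT expression coincides with the same $\mathcal{O}_M$ appearing in~\eqref{rt_outage} when evaluated at $M=m$. Substituting $M=m$ into~\eqref{rt_outage} gives $\mathcal{O}_{\textup{RT}}(m)=\mathcal{O}_m^{m}$, which matches the simplified HT expression $\mathcal{O}_{\textup{HT}}(m,0,r)=\mathcal{O}_m^{m}$, establishing the equality claimed in the corollary.

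There is no real obstacle here; the only point that deserves explicit comment is the interpretation of $(F_{m,0,r})^{0}$ in the edge case where $F_{m,0,r}$ might vanish or be ill-defined (note that $F_{m,n,r}$ contains a $\mathcal{O}_M^{-m}$ term, so one should justify taking the limit $n\to 0$ carefully if $\mathcal{O}_M=0$). I would handle this by noting that for $\mathcal{O}_M\in(0,1]$ the factor $F_{m,0,r}$ is finite and the exponent $2n=0$ makes the whole factor equal to $1$, while the degenerate case $\mathcal{O}_M=0$ trivially yields outage zero on both sides. With this minor caveat addressed, the chain of substitutions constitutes a complete proof.
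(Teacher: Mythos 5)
Your proof is correct and follows essentially the same route as the paper, which justifies the corollary simply by noting that setting $n=0$ in~\eqref{eqn:HT_n_outage} kills the $(F_{m,n,r})^{2n}$ factor, reduces the exponent $m(2n+1)$ to $m$, and makes $M=m+nr=m$, so the expression collapses to $(\mathcal{O}_M)^M$ as in~\eqref{rt_outage}. Your extra care about the $\mathcal{O}_M^{-m}$ term inside $F_{m,0,r}$ in the degenerate case $\mathcal{O}_M=0$ is a harmless refinement the paper does not bother with.
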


\bigskip
When $m=r=1$  HT transmits the information message and $n$ coded messages, the same idea as CT, so that (\ref{eqn:HT_n_outage}) reduces to~(\ref{cti_outage}), leading to  Corollary \ref{cor:ht-cti}.

\bigskip
\begin{corollary} \label{cor:ht-cti}
\textit{The HT scheme is a generalization of CT since} $\mathcal{O}_{\textup{HT}}(1,n,1) =  \mathcal{O}_{CT}(n).$
\end{corollary}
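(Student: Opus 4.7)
The plan is to establish the corollary by direct substitution of $m=r=1$ into the general HT outage expression~(\ref{eqn:HT_n_outage}) and then simplifying $F_{m,n,r}$ so that it matches the polynomial factor appearing in the CT outage formula~(\ref{cti_outage}). Because both outage expressions are written purely in terms of $\mathcal{O}_M$, no further probabilistic argument is needed once the algebraic identity is verified; the proof is a routine check.

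First, I would handle the leading factor. Setting $m=1$ in $\mathcal{O}_M^{m(2n+1)}$ yields $\mathcal{O}_M^{2n+1}$, which is exactly the prefactor of $\mathcal{O}_{\textup{CT}}(n)$. Since the outer exponent $2n$ on the bracketed factor in~(\ref{eqn:HT_n_outage}) is already independent of $m$ and $r$, all that remains is to show that $F_{1,n,1}$ equals $1+\mathcal{O}_M+\mathcal{O}_M^2-5\mathcal{O}_M^3+4\mathcal{O}_M^4-\mathcal{O}_M^5$.

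Next, letting $x=\mathcal{O}_M$ for brevity and substituting $m=r=1$ into the three summands of $F_{m,n,r}$: the first term becomes $x^2$; the second becomes $(1-x)(x^4-x^2-3x^3)$; and the third becomes $x(1+x^{-1}+x-3x^2)$. The potentially worrisome $\mathcal{O}_M^{-m}$ factor is harmless here because it is multiplied by $\mathcal{O}_M^r$, yielding the constant $1$ when $m=r=1$. Expanding the second term gives $-x^5+4x^4-2x^3-x^2$, and the third expands to $1+x+x^2-3x^3$. Summing the three contributions, the $x^2$ terms cancel and the cubic coefficients combine as $-2-3=-5$, producing $1+x+x^2-5x^3+4x^4-x^5$, exactly matching the bracketed polynomial in~(\ref{cti_outage}).

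The main obstacle is purely bookkeeping: tracking signs and coefficients across the three summands, and noting that the negative-exponent term $\mathcal{O}_M^{-m}$ is regularised by its $\mathcal{O}_M^r$ multiplier so that no singularity appears at the specialised parameters. Once $F_{1,n,1}$ is shown to coincide with the CT polynomial factor, combining it with the leading $\mathcal{O}_M^{2n+1}$ establishes $\mathcal{O}_{\textup{HT}}(1,n,1)=\mathcal{O}_{\textup{CT}}(n)$, completing the proof that CT is recovered as the special case of HT in which each information message is transmitted only once and each coded message is transmitted only once.
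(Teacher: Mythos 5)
Your proof is correct and follows the same route the paper takes: the paper simply asserts that setting $m=r=1$ in~(\ref{eqn:HT_n_outage}) reduces it to~(\ref{cti_outage}), and you carry out that substitution explicitly, correctly obtaining $F_{1,n,1}=1+\mathcal{O}_M+\mathcal{O}_M^2-5\mathcal{O}_M^3+4\mathcal{O}_M^4-\mathcal{O}_M^5$ and the prefactor $\mathcal{O}_M^{2n+1}$. The only quibble is the phrase ``the $x^2$ terms cancel'' --- only two of the three cancel, leaving the $+x^2$ that correctly appears in your final polynomial --- but the arithmetic itself is right.
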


\bigskip
\begin{theorem}\label{the:2}
\textit{The outage probability of HT is never larger than that of RT and CT, \textit{i.e.},}
\begin{equation}
    \mathcal{O}_{\textup{HT}}(m,n,r) \leq  \mathcal{O}_{\textup{CT}}(n),  \mathcal{O}_{\textup{RT}}(m).
\end{equation}
\end{theorem}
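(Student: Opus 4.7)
The strategy is to expose a cleaner factorization of the HT outage and then combine it with Corollaries~\ref{cor:ht-rt} and~\ref{cor:ht-cti}. Substituting~\eqref{eq:hyb_pe} into~\eqref{eq:HT_out1} and simplifying \emph{before} expanding $\mathcal{E}_{m,n,r}$ into a polynomial, one obtains the compact identity
\[
\mathcal{O}_{\textup{HT}}(m,n,r) \;=\; \mathcal{O}_M^{\,m}\bigl(1-\mathcal{E}_{m,n,r}\bigr)^{2n}.
\]
I read the statement as an algebraic inequality in the common link-outage parameter $\mathcal{O}_M\in[0,1]$, i.e.\ we compare the three schemes at a fixed per-transmission link outage; this isolates the combinatorial gain of HT over RT and CT.

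\textbf{Bound against RT.}
Because $\mathcal{E}_{m,n,r}\in[0,1]$ as a probability, $(1-\mathcal{E}_{m,n,r})^{2n}\le 1$, and therefore $\mathcal{O}_{\textup{HT}}(m,n,r)\le \mathcal{O}_M^{\,m}=\mathcal{O}_{\textup{RT}}(m)$, with equality iff $n=0$, which is consistent with Corollary~\ref{cor:ht-rt}.

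\textbf{Bound against CT.}
By Corollary~\ref{cor:ht-cti} we have $\mathcal{O}_{\textup{CT}}(n)=\mathcal{O}_{\textup{HT}}(1,n,1)$, so it suffices to show that $\mathcal{O}_{\textup{HT}}(m,n,r)$ is non-increasing in both $m$ and $r$ for fixed $n$. The prefactor $\mathcal{O}_M^{\,m}$ is clearly non-increasing in $m$ since $\mathcal{O}_M\in[0,1]$, so it remains to show that $\mathcal{E}_{m,n,r}$ is non-decreasing in $m$ and in $r$. Introduce $a:=1-\mathcal{O}_M^{\,m}$ and $b:=1-\mathcal{O}_M^{\,r}$, the probabilities that at least one replica of an information or coded message is received; both are non-decreasing in their respective index. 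The three disjoint sub-events entering~\eqref{eq:hyb_Etab}, as well as the events $E_{H,j}$ displayed in Table~\ref{tab:events}, are each monotone \emph{increasing} in the independent Bernoulli indicators of the per-message receptions: adding replicas of any message can only turn an ``F'' into an ``S'' in the natural coupling. Hence $\mathcal{E}_{m,n,r}=\mathbb{P}(E_{H,j})$ is the probability of a monotone event on independent Bernoullis and is non-decreasing in every marginal success probability, and therefore in $m$ and $r$. Combining the two monotonicities gives $\mathcal{O}_{\textup{HT}}(m,n,r)\le\mathcal{O}_{\textup{HT}}(1,n,1)=\mathcal{O}_{\textup{CT}}(n)$ for all $m,r\ge 1$.

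\textbf{Expected obstacle.}
The only step that requires real care is the monotonicity of $\mathcal{E}_{m,n,r}$. A direct attack via $\partial\mathcal{E}/\partial a$ and $\partial\mathcal{E}/\partial b$ starting from~\eqref{eq:hyb_Etab} is feasible but uninformative, and would need to be redone for $n\ne 1$. The probabilistic/coupling route sketched above reduces everything to the elementary fact that the probability of a monotone event on independent Bernoullis is monotone in each marginal, and thereby sidesteps the polynomial manipulation entirely while making both inequalities transparent.
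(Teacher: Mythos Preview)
Your argument is correct, and it is considerably more substantive than the paper's own proof, which consists of the single line ``The proof comes from Corollaries~\ref{cor:ht-rt} and~\ref{cor:ht-cti}.'' The paper's reasoning is really only that RT and CT are special parameter settings of HT, so the \emph{best} HT configuration can never be worse than either competitor; this matches the informal discussion around Figure~\ref{fig:cmp_HT_M4} (``it is possible to set HT to mimic the other schemes''). Your route is genuinely different: you first extract the clean factorization $\mathcal{O}_{\textup{HT}}(m,n,r)=\mathcal{O}_M^{\,m}(1-\mathcal{E}_{m,n,r})^{2n}$ directly from~\eqref{eq:HT_out1}--\eqref{eq:hyb_pe}, and then prove a \emph{pointwise} inequality at fixed link outage $\mathcal{O}_M$ by (i) bounding $(1-\mathcal{E})^{2n}\le 1$ for the RT comparison, and (ii) establishing monotonicity of $\mathcal{E}_{m,n,r}$ in $m$ and $r$ via a coupling argument on the monotone event $E_{H,j}$ for the CT comparison. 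The coupling step is the right way to handle monotonicity here: the events in Table~\ref{tab:events} are visibly upward-closed in the per-message success indicators, so the conclusion follows without polynomial calculus and extends uniformly to general $n$. What your approach buys is an actual inequality for every $(m,n,r)$ rather than merely the containment of search spaces; what the paper's approach buys is brevity. Your explicit caveat that the comparison is at common $\mathcal{O}_M$ (isolating the combinatorial gain and ignoring the load increase through $M$) is appropriate and is exactly the convention the paper uses in Figure~\ref{fig:cmp_HT_M4}.
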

\begin{proof}
The proof comes from Corollaries~\ref{cor:ht-rt} and~\ref{cor:ht-cti}.
\end{proof}
\bigskip

Figure~\ref{fig:cmp_HT_M4} illustrates Theorem~\ref{the:2}, showing the final outage probability (after decoding the replications) versus the link outage probability (before decoding the replications), for three possible configurations of the HT scheme with $M=4$, including those equivalent to CT and RT.
In this example, the CT-specific configuration performs better for low link outage probability (below $\approx 0.4$), and the RT-specific configuration performs slightly better for link outage probabilities very close to 1. For all other values of $\mathcal{O}_M$, the HT-exclusive configuration outperforms the others. Notice that HT is, thus, always better than or equivalent to RT and CT, since it is possible to set HT to mimic the other schemes.

\begin{figure}[t]
    \centering
    \includegraphics{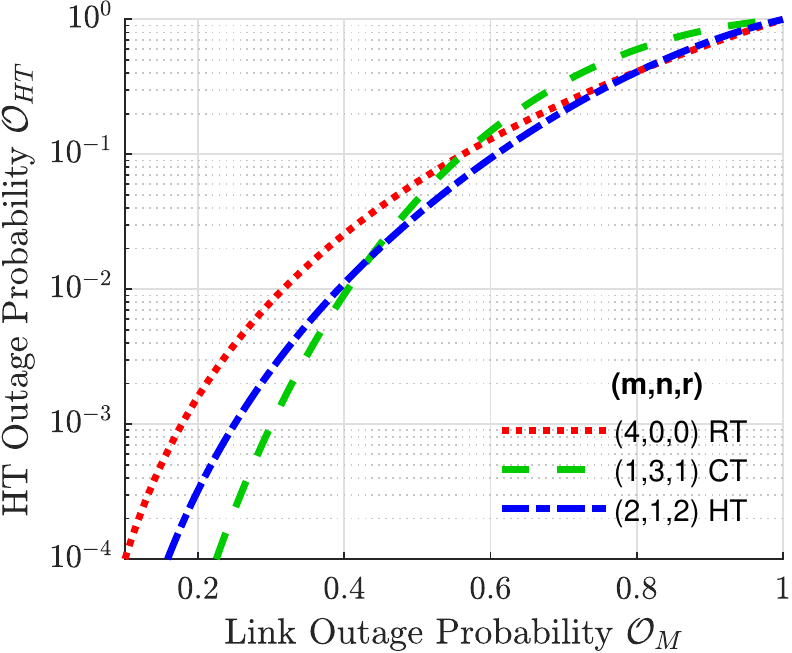}
    \caption{Final outage probability (after replication) versus link outage probability (before replication) for different HT configurations with $M=4$.}
    \label{fig:cmp_HT_M4}
\end{figure}

Regarding implementation, devices do not implement any optimization routine apart from acquiring and transmitting data. We should only include simple XOR operations and some buffers at each end-device with size depending on parameter $n$.
The network server is in charge of all processing with total control over the network, \textit{i.e.}, it has information on all active devices, including their SF and configuration of the replication scheme ($m$, $n$, $r$).
The network server should configure the devices in the join procedure or reconfigure through downlink MAC commands if it deems necessary.
It is an implementation choice if the configuration information is in the header or payload as a new message type.
However, both cases provide a minimum increase in the message length.
For example, with 8 extra bits in a header it is possible to store information about $m$ and $r$ varying from 1 to 8, and $n$ from 1 to 4, covering a huge set of different HT configurations.

\subsection{Average Number of Devices}

As stated in \cite{Candell:IEM:2018}, the number of devices is a key requirement of Industrial IoT applications.
However, it is also important to cover these devices with adequate reliability levels.
We propose to find the maximum number of users that guarantees the minimum reliability level $\mathcal{T}$.
To do so, we must ensure that all devices in the network are above $\mathcal{T}$. Thus, we evaluate this metric considering the worst case position in the network, \textit{i.e.}, at the border ($d_1 = R$).
From~\eqref{eqn:q1}, considering $d_1=R$ and the activity factor increased by $M$, we have that
\begin{align}
    Q_{M} &= \textup{exp} \left [ -2 ~ \pi R^2 ~ \rho_j Mp_j ~ _2F_1 \left(1,\frac{2}{\eta};1+\frac{2}{\eta}; -\frac{1}{\theta} \right) \right ] .
\end{align}
Isolating the node density, 
\begin{equation}
    \rho_j = - \frac{\textup{ln}\left(\frac{1-\mathcal{O}_M}{H_1}\right)}{2~\pi R^2 ~ M p_j ~ _2F_1 \left(1,\frac{2}{\eta};1+\frac{2}{\eta}; -\frac{1}{\theta} \right) },
\end{equation}
where we have $Q_M = \frac{1-\mathcal{O}_{M}}{H_{1}}$, recalling that ${O}_{M}$ is the link outage probability and ${H_1}$ is the connection probability. Note that since we are investigating the outage at a fixed distance, the connection probability $H_1$ is a constant, irrespective of $\rho$.
Finally, since the disk total area is $V=\pi R^2$ and the average number of nodes is $\bar{N}=\rho V$, we have that
\begin{equation}
    \bar{N_j} = - \frac{\textup{ln}\left(\frac{1-\mathcal{O}_M}{H_1}\right)}{2~M p_j  ~ _2F_1 \left(1,\frac{2}{\eta};1+\frac{2}{\eta}; -\frac{1}{\theta} \right) }. \label{eq:N_Q}
\end{equation}
Note that $\bar{N_j}$ denotes the number of devices using SF$j$, while $\bar{N}$ is the total number of devices in the network.
Given~\eqref{eq:N_Q} and the outage expressions of the replication schemes, we aim to find the maximum number of nodes supported by the network that guarantees a minimum reliability level.
To do so, we need do invert all the equations and find the configuration that minimizes the outage, and thus maximize $\bar{N}$. Since CT and HT equations are quite complicated, we evaluate them numerically enumerating all possible configurations up to a maximum number of message copies ($M$).

\subsection{Energy Consumption Model}

The lifetime of a device battery is an important concern for Industrial IoT applications.
SF plays an important role in \lorawan{} energy consumption.
Besides the usual impact of transmission power, transmission period, and payload length, SF increases signal length and therefore greatly increases energy expenditure because higher SFs present lower data rates, extending transmission time and, thus consuming more energy.
When implementing the presented replication schemes, we expect a trade-off between reliability and battery lifetime.

To quantify the impact of the replication schemes, we use an energy consumption model for unacknowledged \lorawan{} presented in \cite{Casals:SENSORS:2017}, which considers the energy consumption and time duration of 11 operating states of a \lorawan{} device.
The average current consumption of a device as a function of time spent in each state is
\begin{equation}
    I_{avg}= \frac{M}{P}\left(\sum^{N_{states}-1}_{i=1}T_iI_i+T_{sleep}I_{sleep}\right),
\end{equation}
where $M$ is the number of message copies inside a period, $N_{states}=11$, $T_i$ and $I_i$ are respectively the duration and current consumption of state $i$ in Table \ref{tab:energy} and
\begin{equation}
    T_{sleep} = P-M\sum^{N_{states-1}}_{i=1} T_i. \label{eqn:sleep}
\end{equation}
Moreover, we calculate the theoretical lifetime of a battery-operated device as
\begin{equation}
    T_{lifetime} = \frac{C_{battery}}{I_{avg}},
\end{equation}
where $C_{battery}$ is the battery capacity.

\begin{table}[t]
\centering
\caption{Energy consumption states of \lorawan{}~\cite{Casals:SENSORS:2017}.}
\label{tab:energy}
\begin{tabular}{@{}llllll@{}}
\toprule
\multirow{2}{*}{\textbf{State}} & \multirow{2}{*}{\textbf{Description}} & \multicolumn{2}{l}{\textbf{Duration}} & \multicolumn{2}{l}{\textbf{Current}} \\ \cmidrule(l){3-6} 
 &  & \textbf{Variable} & \textbf{Value (ms)} & \textbf{Variable} & \textbf{Value (mA)} \\ \midrule
1 & Wake up & $T_{wu}$ & 168.2 & $I_{wu}$ & 22.1 \\
2 & Radio preparation & $T_{pre}$ & 83.8 & $I_{pre}$ & 13.3 \\
3 & Transmission & $t$ & Table \ref{tab:lora_sensi} & $I_{tx}$ & 83.0 \\
4 & Radio off & $T_{off}$ & 147.4 & $I_{off}$ & 13.2 \\
5 & Postprocessing & $T_{post}$ & 268.0 & $I_{post}$ & 21.0 \\
6 & Turn off sequence & $T_{seq}$ & 38.6 & $I_{seq}$ & 13.3 \\
7 & Wait 1st window & $T_{w1w}$ & 983.3 & $I_{1w}$ & 27.0 \\
8 & 1st receive window & $T_{rx1w}$ & Table \ref{tab:lora_sensi} & $I_{rx1w}$ & 38.1 \\
9 & Wait 2nd window & $T_{w2w}$ & $1-T_{rx1w}$ & $I_{2w}$ & 27.1 \\
10 & 2nd receive window & $T_{rx2w}$ & Table \ref{tab:lora_sensi} & $I_{rx2w}$ & 35.0 \\
11 & Sleep & $T_{sleep}$ & \eqref{eqn:sleep} & $I_{sleep}$ & $45 \times 10^{-3}$ \\ \bottomrule
\end{tabular}
\end{table}

From this model, we see that a considerable amount of energy is spent in the downlink receive windows the device opens after each transmission -- in the case of replications, the number of receive windows increases with $M$.
Note that unsynchronized Class A nodes must open the receive window for its entire duration, even if it does not detect any downlink message.
On top of that, as indicated in~\cite{CENTENARO:PIMRC:2017}, the excessive use of \lorawan{} downlink can severely worsen network performance.
Based on the above facts, we propose that devices only open receive windows when transmitting the last message, avoiding the excessive receive windows when sending redundant replications.
With this, the new average current consumption $I_{avg2}$ is
\begin{equation}
    I_{avg2}\!=\!\! \frac{M}{P}\!\!\left(\!\sum^{N_{states}-5}_{i=1}\!\!\! T_iI_i \!+\! \!\!\sum^{N_{states-1}}_{i=7}\!\!\!T_iI_i+T_{sleep2}I_{sleep}\!\right)\!\!,
\end{equation}
where $T_{sleep2}$ is the new sleep duration
\begin{equation}
    T_{sleep2} = P - M\sum^{N_{states}-5}_{i=1} T_i - \sum^{N_{states}-1}_{i=7} T_i.
\end{equation}
Note that the Sleep state is a low power mode, consuming around 1000 times less energy than the other states.

\section{Numerical Results}\label{sec:results}

We evaluate the proposed scheme in terms of success probability, energy consumption, and the maximum number of users while meeting a certain reliability target.
We compare HT with the other replication schemes RT and CT.

We parameterize the path loss  with the empirical data from~\cite{Tanghe:WCL:2008}, \textit{i.e.}, path loss exponent $\eta=3.51$, path loss reference $PL_0=55.05$~dB, and reference distance $d_0=15$ meters.
We also consider network radius $R=200$ meters as proposed by~\cite{Luvisotto:WCMC:2018}, SIR level $\theta=1$~dB as measured in~\cite{Croce:CL:2018,Mahmood:TII:2019}, and total transmit power $\mathcal{P}_t=11$~dBm as considered by~\cite{Casals:SENSORS:2017}.
The \lora{} transceiver uses bandwidth $B=125$~kHz and receiver noise figure $\textup{NF}=6$~dB.
We assume that all devices run the same application with one information message every 10 minutes ($P=600$ seconds), which means 144 messages per day on average, resulting on activity factors $\{69, 120, 241, 413, 826, 1652\} \times10^{-6}$ according to the respective SF\{7-12\}.
We consider reliability targets of $\mathcal{T} \in \{0.99, 0.999\}$ and restrict the search space of the optimization procedure to $1\leq M \leq 10$ message copies per period.
Note that we limited SF12 up to 6 message copies to adhere to regional duty cycle constraints of 1\%.
We consider a battery capacity $C_{battery} = 2400$~mAh, what is typical of AA batteries.
Finally, we call DT (Direct Transmission) the baseline scenario with no replication scheme, as in \cite{Montejo-Sanchez:WCL:2019}, and HT$^*$ the HT scheme with number of messages copies $M$ restrained to at maximum the same number as CT, therefore ensuring a HT$^*$ configuration using the same amount of resources as CT.

Figure~\ref{fig:users_out} compares the outage probability using SF7 for different $\bar{N_7}$.
We consider a single SF because including them into one curve would cause confusion.
Nevertheless, similar plots for other SFs present the same tendency.
Also, the numbers next to the curves represent the message copies ($M$) used to achieve the results in that area.
First, we see that any replication scheme performs significantly better than DT.
We also see that optimal RT requires more message copies and still has a higher outage than the other schemes.
Optimal CT is the scheme that requires fewer message copies and outperforms RT.
Optimal HT outperforms all other schemes at the cost of more messages compared to CT.
However, optimal HT$^*$ can still outperform CT using the same amount of resources.
\begin{figure}[t]
    \centering
    \includegraphics{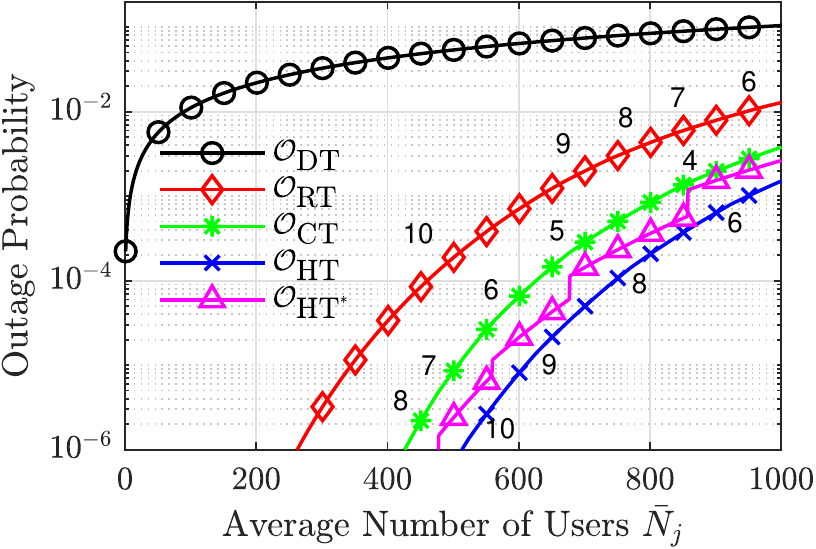}
    \caption{Outage probability of SF7 for each scheme using optimal configuration for different number of users $\bar{N_7}$.}
    \label{fig:users_out}
\end{figure}
\begin{figure}[t]
    \centering
    \includegraphics{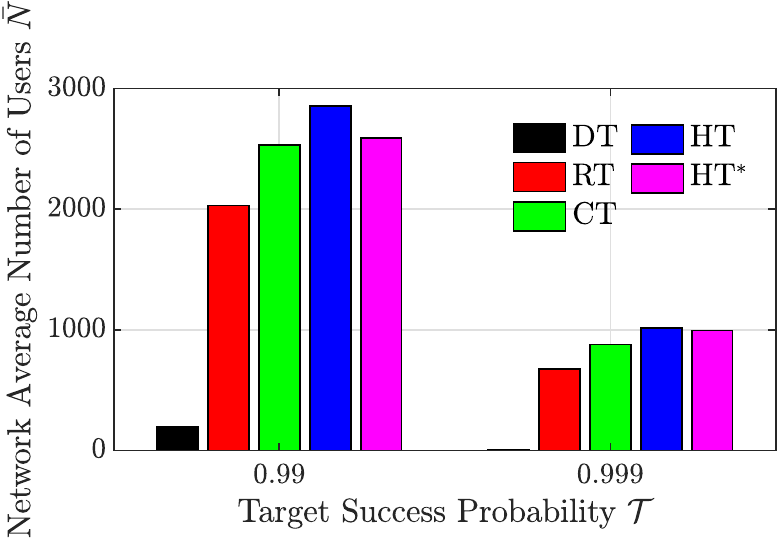}
    \caption{Average number of nodes $\bar{N}$ for each scheme and reliability targets $\mathcal{T} \in \{0.99, 0.999\}$.}
    \label{fig:bars_users}
\end{figure}

Figure~\ref{fig:bars_users} shows the average number of users in a network to maintain a target success probability $\mathcal{T}$ for different optimized schemes.
The parameters RT and CT are detailed in Table~\ref{tab:rt_cti_optimal}, while HT parameters are in Table~\ref{tab:ht_optimal}.
Again, we see that HT outperforms the other schemes at the expense of some more message copies.
However, HT$^*$ still has better results than CT using the same resources.
Again we see that DT has the worst performance.
With higher reliability, $\mathcal{T}=0.999$, we see that the difference from CT and HT$^*$ to HT is smaller than for $\mathcal{T}=0.99$.
This happens because with higher reliability levels, CT tends to perform better, as Figure~\ref{fig:cmp_HT_M4} showed.

Figure~\ref{fig:proposed-lifetime} presents battery lifetime as a function of message copies for SFs 7 and 12, considering regular \lorawan{} and the proposed adapted \lorawan{} with less receive windows.
Increasing the number of messages greatly impacts the battery lifetime, but the impact reduces with the proposed protocol modification.
Also, devices using SF7 tend to have higher battery lifetime than SF12, due to SF12 longer ToA.

\begin{figure}[t]
    \centering
    \includegraphics{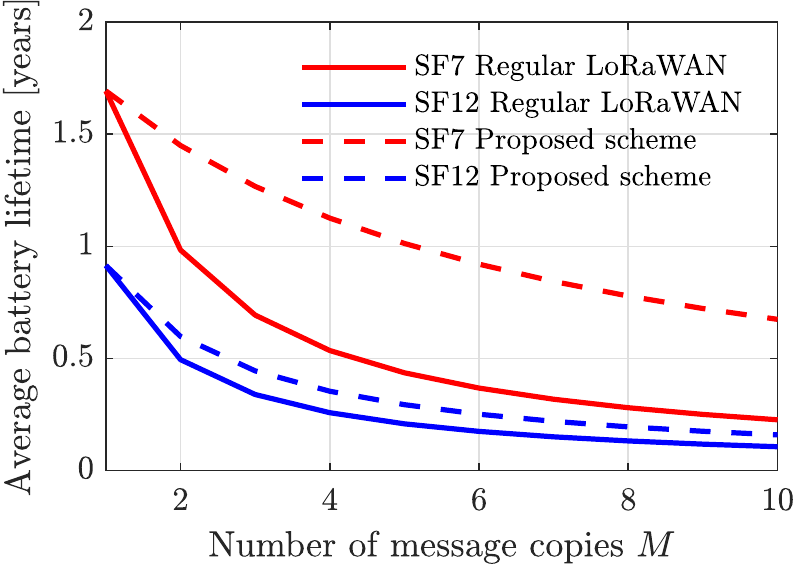}
    \caption{Average device lifetime using SF7 and SF12 for default \lorawan{} and the proposed modified protocol.}
    \label{fig:proposed-lifetime}
\end{figure}

\begin{table}[t]
\centering
\caption{RT and CT optimal number of message copies $M$ for each reliability target and spreading factor.}
\label{tab:rt_cti_optimal}
\begin{tabular}{@{}lccccccc@{}}
\toprule
\textbf{Scheme}     & \textbf{$\mathcal{T}$}     & \textbf{SF7}            & \textbf{SF8}            & \textbf{SF9}            & \textbf{SF10}          & \textbf{SF11}          & \textbf{SF12} \\ \midrule
\multirow{2}{*}{RT} & \multicolumn{1}{c|}{0.99}  & \multicolumn{1}{c|}{7}  & \multicolumn{1}{c|}{7}  & \multicolumn{1}{c|}{7}  & \multicolumn{1}{c|}{7} & \multicolumn{1}{c|}{6} & 6             \\
                    & \multicolumn{1}{c|}{0.999} & \multicolumn{1}{c|}{10} & \multicolumn{1}{c|}{10} & \multicolumn{1}{c|}{10} & \multicolumn{1}{c|}{9} & \multicolumn{1}{c|}{9} & 6             \\ \midrule
\multirow{2}{*}{CT} & \multicolumn{1}{c|}{0.99}  & \multicolumn{1}{c|}{3}  & \multicolumn{1}{c|}{3}  & \multicolumn{1}{c|}{3}  & \multicolumn{1}{c|}{3} & \multicolumn{1}{c|}{3} & 3             \\
                    & \multicolumn{1}{c|}{0.999} & \multicolumn{1}{c|}{5}  & \multicolumn{1}{c|}{5}  & \multicolumn{1}{c|}{5}  & \multicolumn{1}{c|}{5} & \multicolumn{1}{c|}{5} & 5             \\ \bottomrule
\end{tabular}
\end{table}

\begin{table}[t]
\centering
\caption{HT and HT$^*$ optimal configuration for each reliability target and SF.}
\label{tab:ht_optimal}
\begin{tabular}{@{}ccccccc|cccc@{}}
\toprule
\multicolumn{1}{c|}{\multirow{2}{*}{\textbf{Scheme}}} & \multicolumn{2}{c|}{\textbf{Spreading Factor}}             & \multicolumn{4}{c|}{\textbf{SF7-SF11}} & \multicolumn{4}{c}{\textbf{SF12}} \\ \cmidrule(l){2-11} 
\multicolumn{1}{c|}{}                                 & \multicolumn{2}{c|}{\textbf{Parameter}}                    & $m$      & $n$     & $r$     & $M$     & $m$     & $n$    & $r$    & $M$    \\ \midrule
\multirow{2}{*}{\textbf{HT}}                          & \multirow{2}{*}{$\mathcal{T}$} & \multicolumn{1}{c|}{\textbf{0.99}}  & 2        & 1       & 3       & 5       & 2       & 1      & 3      & 5      \\
                                                      &                      & \multicolumn{1}{c|}{\textbf{0.999}} & 2        & 1       & 4       & 6       & 2       & 1      & 3      & 5      \\ \midrule
\multirow{2}{*}{\textbf{HT$^*$}}                      & \multirow{2}{*}{$\mathcal{T}$} & \multicolumn{1}{c|}{\textbf{0.99}}  & 1        & 1       & 2       & 3       & 1       & 1      & 2      & 3      \\
                                                      &                      & \multicolumn{1}{c|}{\textbf{0.999}} & 2        & 1       & 3       & 5       & 2       & 1      & 3      & 5      \\ \bottomrule
\end{tabular}
\end{table}

\section{Final Comments}
\label{sec:conclusion}
This paper proposed HT, a coded replication scheme suitable for \lorawan{}, which is a generalization of RT and CT.
We presented a detailed analysis in terms of outage probability and the number of devices, considering both connection and collision probabilities, showing that RT and CT never outperform HT.
The superiority of HT increases with the reliability target, making it suitable for industrial applications.
Moreover, HT has a larger set of configuration parameters than RT and CT, being more flexible and becoming able to better adapt to different deployments and requirements.
As a future work, we plan to test the proposed scheme in a testbed.

\bibliographystyle{IEEEtran}
\bibliography{references}

\end{document}